\documentclass[10pt,letterpaper,conference,twocolumn]{IEEEtran}

\usepackage{graphicx}
\usepackage{amsmath}
\usepackage{amssymb}
\usepackage{amsxtra}
\usepackage{epsfig}
\usepackage{amsxtra}
\usepackage{amsfonts}
\usepackage{url}

\usepackage{epsf,graphicx}
\usepackage{bm}
\usepackage[active]{srcltx}

\newtheorem{thm}{Theorem}
\newtheorem{prop}{Proposition}

\newtheorem{lem}{Lemma}

\usepackage{amsmath}
\usepackage{latexsym}
\usepackage{enumerate}
\usepackage{graphics}
\usepackage{graphicx}
\usepackage{psfrag}
\usepackage{amsfonts}
\usepackage{amssymb}%
\setcounter{MaxMatrixCols}{30}
%TCIDATA{OutputFilter=latex2.dll}
%TCIDATA{Version=5.50.0.2890}
%TCIDATA{CSTFile=IEEEtran.cst}
%TCIDATA{LastRevised=Thursday, March 19, 2009 17:33:43}
%TCIDATA{<META NAME="GraphicsSave" CONTENT="32">}
%TCIDATA{<META NAME="SaveForMode" CONTENT="1">}
%TCIDATA{BibliographyScheme=BibTeX}
%BeginMSIPreambleData
\providecommand{\U}[1]{\protect\rule{.1in}{.1in}}
%EndMSIPreambleData
\providecommand{\U}[1]{\protect\rule{.1in}{.1in}}
\providecommand{\U}[1]{\protect\rule{.1in}{.1in}}
\providecommand{\U}[1]{\protect\rule{.1in}{.1in}}

\begin{document}

% paper title
\title{The Deterministic Sum Capacity of a Multiple Access Channel Interfering with a Point to Point Link}

\author{
\authorblockN{J\"org B\"uhler}
\authorblockA{Heinrich-Hertz-Lehrstuhl f\"ur Informationstheorie\\ und theoretische Informationstechnik \\
Technische Universit\"at Berlin\\
Einsteinufer 25, D-10587 Berlin, Germany\\
Email: joerg.buehler@mk.tu-berlin.de}
\and
\authorblockN{Gerhard Wunder}
\authorblockA{Fraunhofer Heinrich Hertz Institute \\
%Fraunhofer Institute for Telecommunications, Heinrich-Hertz-Institut\\
Einsteinufer 37, D-10587 Berlin, Germany\\
Email: gerhard.wunder@hhi.fraunhofer.de}
}

% make the title area
\maketitle
\def\thefootnote{\fnsymbol{footnote}}
\footnotetext{This research is supported by Deutsche Forschungsgemeinschaft (DFG) under grant WU 598/1-2.}

\renewcommand{\thefootnote}{\arabic{footnote}} 

\begin{abstract}
In this paper, we use  the linear deterministic approximation model to study a two user multiple access channel mutually interfering with a point to point link, which represents a basic setup of a cellular system. We derive outer bounds on the achievable sum rate and construct coding schemes achieving the outer bounds. For a large parameter range, the sum capacity is identical to the sum capacity of  the interference channel obtained by silencing the weaker user in the multiple access channel. For other interference configurations, the sum rate can be increased using interference alignment, which exploits the channel gain difference of the users in the multiple access channel. From these results, lower bounds on the generalized degrees of freedom for the Gaussian counterpart are derived. 
\end{abstract}

\section{Introduction}\label{sec:introduction} 
In recent years, approximate characterizations of capacity regions of multi-user systems have gained more and more attention, with one of the most prominent examples being the characterization of the capacity of the Gaussian interference channel to within one bit/s/Hz in \cite{EtkinTseWangIT08}. One of the tools that arised in the context of capacity approximations and has been shown to be useful in many cases is the {\em linear deterministic model} introduced in \cite{AvestimehrTseAllerton2007} \cite{AvestimehrDiggaviTse_IT2011}. Here, the channel is modeled as a deterministic mapping that operates on bit vectors and mimics the effect the physical channel and interfering signals have on the binary expansion of the transmitted symbols. Basically, the effect of the channel is to erase a certain number of ingoing bits, while superposition of signals is given by the modulo addition. 
Even though this model deemphasizes the effect of thermal noise, it is able to capture some important basic features of wireless systems, namely the superposition and broadcast properties of electromagnetic wave propagation.  Hence, in multi-user systems where interference is one of the most important limiting factors on system performance, the model can also be useful to devise effective coding and interference mitigation techniques. There are many examples where a linear deterministic analysis can be successfully carried over to coding schemes for the physical (Gaussian) models or be used for approximative capacity or (generalized) degrees of freedom determination, see for example \cite{BreslerTseETCOMM08}-\nocite{Cadambe_IT2009}\cite{HuangCadambeJafarISIT09}.

%Among them are results concerning the the two-user interference channel \cite{BreslerTseETCOMM08}, the $K$-user interference channel \cite{Cadambe_IT2009}, the $X$-channel \cite{HuangCadambeJafarISIT09}, fading broadcast channels without channel state information at the transmitter \cite{TseYates09}, many-to-one and one-to-many interference channels \cite{Bresler_IT2010}, cooperative interference channels \cite{PrabhakaranViswanath_IT2011Source,PrabhakaranViswanath_IT2011Destination} and cyclically symmetric deterministic interference channels \cite{Bandemer_ISIT2010}. Here, the optimal transmission schemes often involve {\em interference alignment}, where the transmitted signals are designed such that at the receivers, the undesired part of the signal that is due to the different interfering signals aligns in a certain subspace of the receive space.

{\bf Contributions.}
From a practical viewpoint, {\em cellular systems} are of major interest. Generally, a cellular system consists of a set of base stations each communicating with a distinct set of (mobile) users. Effective coding and interference mitigation schemes are still an active area of research. Approximative models such as the linear deterministic approach might help to gain more insight into these problems. In \cite{BW_ISIT11}, the capacity of a basic cellular setup, a multiple access interfering with a point to point link, has been determined for the linear deterministic model in the case of {\em symmetric weak interference}. In this paper, we extend these results to the case of arbitrarily strong interference with respect to the achievable sum rate. Furthermore, we use these results to lower bound the generalized degrees of freedom for the corresponding Gaussian channel. 

{\bf Organization.}
The paper is organized as follows: 
Section \ref{sec:systemmodel} introduces the system model. In section \ref{sec:outerbounds}, we derive outer bounds on the 
achievable sum rate. In section \ref{sec:achievability}, we construct coding schemes that achieve these outer bounds, thereby
characterizing the sum capacity of the system. From these results, a lower bound on the the generalized degrees of freedom for the Gaussian case is derived in section \ref{sec:gendof}. Finally, section \ref{sec:conclusions} concludes the paper.

{\bf Notation.}
Throughout the paper, $\mathbb{F}_2 = \{0,1\}$ denotes the binary finite field, for which addition is written as $\oplus$, which is addition modulo 2. For two matrices $A \in \mathbb{F}_2^{n_A \times m}$ and $B \in \mathbb{F}_2^{n_B \times m}$, we denote by $[A;B] \in \mathbb{F}_2^{n_A + n_B \times m}$ the matrix that is obtained by stacking $A$ over $B$. Moreover, for a sequence of matrices $A_1, \ldots, A_n$, we write $(A_k)_{k=1}^n$ for the stacked matrix, i.e.
$(A_k)_{k=1}^n = [(A_k)_{k=1}^{n-1}; A_n]$. Similarly, $[A B]$ stands for placing $A$ next to $B$. For a matrix $A \in \mathbb{F}_2^{n \times m}$, for $1 \leq i \leq j \leq n$ the sub-matrix $A[i:j] \in  \mathbb{F}_2^{j-i+1 \times m}$ is given by taking only the rows $i$ to $j$ of the matrix $A$. For $x \in \mathbb{R}$, the positive part is denoted by $(x)^{+} = \text{max}(x,0)$. Finally, $\text{div}$ and $\text{mod}$ denote integer division and the modulo operation, respectively, where we use the convention $x \text{~div~} 0 = 0$. 
%Similarly, $[A|B]$ stands for placing $A$ next to $B$. The zero matrix of size $m \times n$ and the matrix with all entries equal to one are denoted by $\mathbf{0}_{m\times n}$ and $\mathbf{1}_{m\times n}$, respectively. Furthermore, $\mathbf{1}_m = \mathbf{1}_{m\times 1}$, $\mathbf{0}_m = \mathbf{0}_{m\times 1}$ and $I_{n}$ is the $n \times n$ identity matrix. For a matrix $A \in \mathbb{F}_2^{m \times n}$, we write $|A|$ for the number of ones in $A$.

\section{System Model}\label{sec:systemmodel} 

%\begin{figure}
%\begin{center}
%\includegraphics[scale=0.35]{SystemDiagram.eps}
%\caption{System model.}
%\label{fig:systemmodel}
%\end{center} 
%\end{figure} 

The system we consider here represents a basic version of the uplink of cellular system and consists of three transmitters (mobile users) $Tx_{1},Tx_{2}$ and $Tx_{3}$ and two receivers (base stations) $Rx_{1}, Rx_{2}$. The system is modeled using the {\em linear deterministic model} \cite{AvestimehrTseAllerton2007, AvestimehrDiggaviTse_IT2011}. 
Here, the input symbol at transmitter $Tx_{i}$ is given by a bit vector $X_i \in \mathbb{F}_2^q$ and the output bit vectors $Y_j$ at $Rx_{j}$ are deterministic functions of the inputs: 
Defining the shift matrix $S \in \mathbb{F}_2^{q \times q}$ by
\begin{equation}
S = \begin{pmatrix} 0 & 0 & 0 & \cdots & 0\\ 1 & 0 & 0 & \cdots & 0\\ 0 & 1& 0 & \cdots &0 
\\ \vdots & \ddots& \ddots & \ddots & \vdots \\ 0 & \cdots& 0 &1 & 0 \\ 
\end{pmatrix},
\end{equation}
the input/output equations of the system are given by
\begin{IEEEeqnarray}{rCl}
Y_1 & = & S^{q-n_{11}} X_1 \oplus S^{q-n_{12}}X_2 \oplus S^{q-n_{13}}X_3, \\ \notag
Y_2 & = & S^{q-n_{21}} X_1 \oplus S^{q-n_{22}}X_2 \oplus S^{q-n_{23}}X_3. 
\end{IEEEeqnarray}
Here, $q$ is chosen arbitrarily such that $q \geq \text{max}_{i,j}\{n_{ij}\}$. Note that $n_{ij}$ gives the number of bits that can be passed over the channel between $Tx_{j}$ and $Rx_{i}$, i.e. $n_{ij}$ represent channel gains. There are three messages to be transmitted in the system: $W_{ij}$ denotes the message from transmitter $Tx_{j}$ to the intended receiver $Rx_{i}$. The definitions of (block) codes, error probability, achievable rates and the capacity region are according to the standard information-theoretic definitions. For the remainder of the paper, the transmission rate corresponding to message $W_{11}$ is represented by $R_1$, the rate corresponding to $W_{12}$ by $R_2$ and the rate for $W_{23}$ by $R_3$. The sum rate is written as $R_{\Sigma} = R_1 + R_2 + R_3$. 

In the following, we assume without loss of generality that $n_{11} \geq n_{12}$ and write $n_1 = n_{11}, n_2 = n_{12}$. Also, we let $\Delta = n_1 - n_2$. In order to keep the presentation clear and reduce the number of cases to be distinguished, we make some further assumptions on the channel gains: We let $n_{23} = n_1$ and $n_i := n_{21} = n_{22} = n_{13}$. We remark that these restrictions may seem unrealistic, but can easily be removed and the techniques applied in the following extend to more general cases as well. 

% For the {\em weak interference} case, for which $2 n_i \leq n_2$, the capacity region has been characterized in \cite{BW_ISIT11}. It is given by the set of points $(R_1,R_2,R_3) \in \mathbb{R}^3$ satisfying the constraints
% \begin{IEEEeqnarray}{rCl} \label{eq:bound1}
% R_1 & \leq & n_1 \\ \label{eq:bound2}
% R_2 & \leq & n_2 \\ \label{eq:bound3}
% R_3 & \leq & n_1 \\ \label{eq:bound4}
% R_1 + R_2 & \leq & n_1 \\ \label{eq:bound5}
% R_1 + R_3 & \leq & 2(n_1 - n_i)  \\ \label{eq:bound6}
% R_2 + R_3 & \leq & n_1 + n_2 - 2n_i\\ \label{eq:bound7}
% R_1 + R_2 + R_3 & \leq & n_1 + n_2 - 2n_i + \varphi_1(n_i,\Delta)~\\ \label{eq:bound8}
% R_1 + R_2 + 2R_3 & \leq & 3n_1 - 2n_i.
% \end{IEEEeqnarray}
%Here, the function $\varphi_1$ for $p,q \in \mathbb{N}_{0}$ is defined in \ref{eq:phi1} below. 
The corresponding (real) Gaussian channel is defined by output symbols
\begin{IEEEeqnarray}{rCl} \label{eq:GaussianChannel}
Y_1 &=& h_{1} X_{1} + h_{2}X_2 + h_{i}X_3 + Z_1,\\ \notag
Y_2 &=& h_{i} X_{1} + h_{i}X_2 + h_{1} X_3 + Z_2
\end{IEEEeqnarray}
with (non-varying) channel coefficients $h_1,h_2,h_i \in \mathbb{R}_{\geq 0}$, input symbols $X_i$ subject to power constraints $\mathbf{E}[X_i^2] \leq P_i$ and additive white Gaussian noise $Z_i \sim \mathcal{N}(0,1)$. 

We remark that a related model with identical channel gains in the two-user cell ($h_1 = h_2$) is studied in \cite{ChaabanSezgin2010}. In this work, a capacity outer bound is derived, which is achievable in some cases. 
For the {\em weak interference} case, for which $2 n_i \leq n_2$, the capacity region for the deterministic model has been characterized in \cite{BW_ISIT11}. Here, the maximum sum rate is given by 
\begin{equation}
R_{\Sigma} \leq  n_1 + n_2 - 2n_i + \varphi_1(n_i,\Delta),
\end{equation}
where the function $\varphi_1$ is defined in (\ref{eq:phi1}) below. In the following section, we derive outer bounds on the achievable sum rate for scenarios with arbitrarily strong interference.

\section{Outer bounds on sum rate}\label{sec:outerbounds} 
Before stating the outer bounds, we introduce some definitions. Throughout the following, we let $\alpha = \frac{n_i}{n_1}, \beta = \frac{n_2}{n_1} \leq 1$, $\sigma = 2n_i - n_1$ and $\tau = 2n_i - n_2$. Furthermore, for $p,q \in \mathbb{R}_{\geq 0}$, we define $l(p,q) = \left \lfloor\frac{p}{q}\right\rfloor$ for $q > 0$ and $l(p,0) = 0$. Moreover, we define the 
functions $\varphi_1,\varphi_2: \mathbb{R}_{\geq 0} \times \mathbb{R}_{\geq 0} \rightarrow \mathbb{R}$ as (note that $0$ is considered as an even number)
\begin{equation} \label{eq:phi1}
\varphi_1(p,q) = \begin{cases} q + \frac{l(p,q) q}{2}, & \mbox{if} \;\; l(p,q) \;\; \mbox{is even} \\ p - \frac{(l(p,q)-1)q}{2}, & \mbox{if} \;\; l(p,q) \;\; \mbox{is odd}, \end{cases}
\end{equation}
\begin{equation} \label{eq:phi2}
\varphi_2(p,q) = \begin{cases} p - \frac{l(p,q) q}{2}, & \mbox{if} \;\; l(p,q) \;\; \mbox{is even} \\ \frac{(l(p,q)+1)q}{2}, & \mbox{if} \;\; l(p,q) \;\; \mbox{is odd}. \end{cases}
\end{equation}
Moreover, let $\overline{\alpha} = \text{min}\left(1-\frac{\beta}{2},\frac{2}{3}\right)$. Then, we have the following outer bound on the achievable sum rate:
\begin{prop} \label{prop:outerbounds}
For $\alpha < \beta$, it holds that 
 \begin{equation}
R_{\Sigma} \leq \begin{cases} n_1 + n_2 - 2n_i  + \varphi_1(n_i,\Delta) &\alpha \in \left[0, \frac{1}{2}\right] \\ 
2n_i + \varphi_2(n_2-n_i,\Delta), &  \alpha \in \left(\frac{1}{2},\overline{\alpha}\right) \\
2n_i + \varphi_2(2n_1- 3n_i,\Delta), &  \alpha \in \left[\overline{\alpha},\frac{2}{3}\right) \\
\text{min}\left(2n_1,\text{max}(n_1,n_i) + (n_1-n_i)^{+}\right), & \alpha \in \left[\frac{2}{3},\infty\right)\\
\end{cases} 
\end{equation}
For $\alpha \geq \beta$, 
\begin{IEEEeqnarray}{l}
R_{\Sigma} \leq \text{min}\left(\text{max}(n_1,n_i) + (n_1-n_i)^{+}, \right.\\ \notag
\quad \quad  \quad \quad \,\,\,\,\,  \left. 2 \cdot \text{max}(n_i,(n_1-n_i)^{+})\right).
\end{IEEEeqnarray}
 \end{prop}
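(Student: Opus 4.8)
The plan is to derive each outer bound by feeding the receivers genie-aided side information and invoking Fano's inequality, exploiting the deterministic structure of the channel to bound conditional entropies by the number of bits that actually survive the shift maps. Since every output $Y_j$ is a deterministic function of the inputs, each $H(Y_j)$ is bounded above by the number of nonzero output coordinates, i.e. by $\max_{j}\{n_{ij}\}$. The general strategy is: for each sub-case of $\alpha$ (and of the $\alpha \geq \beta$ regime) I would choose a genie that supplies one receiver with exactly those bits of the interfering codewords needed to decode its own messages, then write the sum of the three rates as a telescoping sum of entropy differences, and finally count surviving bit-levels to get a closed-form upper bound.

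\textbf{Step-by-step approach.}
First I would treat the low-interference regime $\alpha \in [0,\tfrac12]$. Here the bound should coincide with the weak-interference sum-rate formula already recorded in the excerpt (the $R_{\Sigma} \leq n_1 + n_2 - 2n_i + \varphi_1(n_i,\Delta)$ expression from \cite{BW_ISIT11}), so I would either cite that result directly or re-derive it by the same cut-set/genie argument, the point being that below $\alpha=\tfrac12$ the interference is weak enough that the third user's signal occupies disjoint bit-levels and the $\varphi_1$ correction captures the alignment gain from the channel-gain difference $\Delta$ between the two MAC users. Second, for the intermediate regimes $\alpha \in (\tfrac12,\overline\alpha)$ and $\alpha \in [\overline\alpha,\tfrac23)$, I would bound $R_1+R_2$ (the MAC-into-$Rx_1$ rates) by conditioning $Rx_1$'s output on a genie-provided version of $X_3$ restricted to its top bit-levels, and bound $R_3$ separately by the point-to-point link's capacity minus the interference floor; the two thresholds $\overline\alpha=\min(1-\tfrac{\beta}{2},\tfrac23)$ and the appearance of $\varphi_2$ evaluated at $n_2-n_i$ versus $2n_1-3n_i$ should emerge from which of two competing genie choices gives the tighter count, so I expect to split on exactly the inequality defining $\overline\alpha$. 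Third, for $\alpha \geq \tfrac23$ and for the entire $\alpha \geq \beta$ case, the bounds become the ``very strong interference'' cut-set expressions $\text{max}(n_1,n_i)+(n_1-n_i)^{+}$ and $2\,\text{max}(n_i,(n_1-n_i)^{+})$; these I would get from a straightforward combination of individual single-link bounds ($R_i \leq n_{ii}$), a MAC bound, and a Z-channel/interference argument, taking minima over the admissible genie assignments.

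\textbf{Main obstacle.}
The routine parts are the entropy-counting and Fano steps; the hard part will be the bookkeeping in the intermediate regimes, specifically showing that the genie-aided bounds collapse exactly to the $\varphi_2$ expressions. The functions $\varphi_1,\varphi_2$ in (\ref{eq:phi1})–(\ref{eq:phi2}) are defined by a parity case-split on $l(p,q)=\lfloor p/q\rfloor$, which strongly suggests that the tight bound arises from an iterative or recursive alignment argument whose number of rounds is governed by $\lfloor n_i/\Delta\rfloor$; reproducing that parity-dependent closed form from a direct single-shot entropy bound is delicate, because the even/odd distinction reflects whether the last alignment round leaves a full or partial bit-level free. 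I would therefore spend most of the effort justifying the reduction to the ``silenced weaker user'' interference channel wherever that is optimal, and isolating precisely the parameter sub-range where interference alignment strictly improves on it; verifying that the threshold $\overline\alpha$ is the exact crossover between these two behaviours, and that the piecewise expression is continuous there, is where I expect the calculation to be most error-prone.
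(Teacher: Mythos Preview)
Your overall framework (Fano plus genie, then count surviving bit-levels) matches the paper's, and your intuition that the parity in $\varphi_1,\varphi_2$ comes from a block decomposition into pieces of size $\Delta$ is exactly right. But in the intermediate regime $\alpha\in(\tfrac12,\tfrac23)$ your concrete plan has a real gap.

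You propose to hand $Rx_1$ the top bits of $X_3$ and then bound $R_3$ ``separately''. That decoupling is too loose: once $Rx_1$ can subtract $X_3$ you get $R_1+R_2\le n_1$, and combining with any stand-alone bound on $R_3$ will not reproduce $2n_i+\varphi_2(\cdot,\Delta)$. The paper does something structurally different. It does \emph{not} feed either receiver bits of the interferer's codeword; instead it takes $G_1$ and $G_2$ to be sub-vectors of $Y_1^N$ and $Y_2^N$ themselves (so no information is added) and uses them only to steer the chain rule. The point of this decomposition is that the negative term $H(Y_1^N\mid G_1,X_1^N,X_2^N)=H(\hat X_3)$ at receiver~1 cancels against a positive piece of $H(G_2)$ at receiver~2, and similarly a $H(X_1^t\oplus X_2^t)$ term cancels across the two mutual-information expressions. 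What survives after these cross-receiver cancellations is a single residual of the form
\[
H\big(X_1[\,a{:}b\,]\oplus X_2[\,a'{:}b'\,]\big)\;-\;H\big(X_1[\,c{:}d\,]\oplus X_2[\,c'{:}d'\,]\big),
\]
i.e.\ the difference of two entropies of the \emph{same} pair of independent sources, combined at two different relative shifts (by $\Delta$). Your plan never produces such a coupled residual, so it cannot reach the tight bound.

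That residual is where all of the $\varphi_2$ content lives, and bounding it is the paper's key technical lemma: for independent $A\in\mathbb F_2^{n\times m}$, $B\in\mathbb F_2^{(n+\Delta)\times m}$ with $B'=B[1{:}n]$, $B''=B[\Delta{+}1{:}n{+}\Delta]$,
\[
H(A\oplus B')-H(A\oplus B'')\le m\,\varphi_2(n,\Delta).
\]
The proof chops $A,B',B''$ into $l=\lfloor n/\Delta\rfloor$ blocks of height $\Delta$, then conditions the negative term on the alternating sub-collection $(A_{2k-1})_k,(B_{2k})_k$; by independence of $A$ and $B$ this lower-bounds the negative term by a sum of individual block entropies that dominate the corresponding positive blocks, and the even/odd case split on $l$ is exactly what produces the two branches of $\varphi_2$. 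This block-conditioning step is the missing idea in your proposal; the ``iterative alignment'' you allude to is not needed for the converse, only this single conditioning trick. The split at $\overline\alpha$ then falls out because $\sigma\ge\Delta$ versus $\sigma<\Delta$ changes which rows feed the lemma (giving $n=2n_1-3n_i$ versus $n=n_2-n_i$), not because a different genie becomes tighter.
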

We remark that the bounds for $\alpha \geq \beta$ and $\alpha < \beta,\alpha \geq \frac{2}{3}$ coincide with the sum capacity for the interference channel formed by $Tx_1,Tx_3$ and $Rx_1,Rx_2$ (also see section \ref{sec:achievability}). 
\begin{proof}

In order to prove the outer bounds on the sum rate, we first apply Fano's inequality: For each triple of achievable rates $(R_1,R_2,R_3)$, Fano's inequality implies that there exists a sequence
$\varepsilon_{N}$ with $\varepsilon_{N} \rightarrow 0$ for $N \rightarrow \infty$ and
a sequence of joint factorized distributions $p(x_1^N)p(x_2^N)p(x_3^N)$ such that for all $N \in \mathbb{N}$
$R_1 + R_2 \leq \frac{1}{N}I(X_1^N,X_2^N;Y_1^N) + \varepsilon_{N}$ and $R_3 \leq \frac{1}{N}I(X_3^N;Y_2^N) + \varepsilon_{N}$.
%\begin{IEEEeqnarray}{rCl}
%R_1 + R_2 &\leq& \frac{1}{N}I(X_1^N,X_2^N;Y_1^N) + \varepsilon_{N},\\
%R_3 &\leq& \frac{1}{N}I(X_3^N;Y_2^N) + \varepsilon_{N}.
 %\end{IEEEeqnarray}
\begin{figure}
\begin{center}
\includegraphics[scale=0.46]{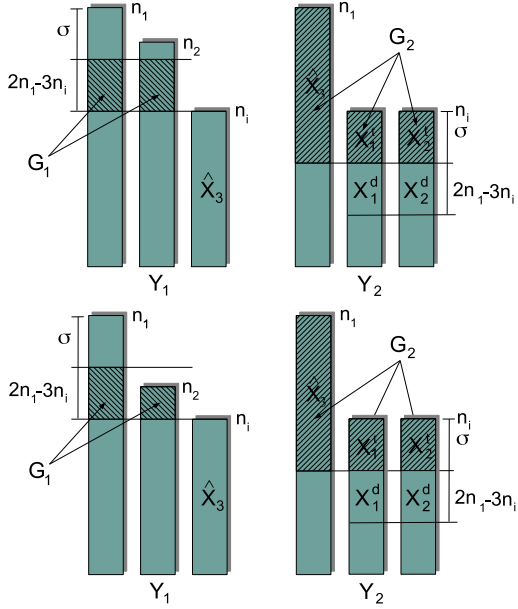}
\caption{Illustration of received signals and the signal parts used for upper bounding the sum rate. Here, the bars represent the bit vectors as seen at the two receivers (for $N=1$);  the zero parts due to the channel shifts are not displayed. The upper figure represents the case $\alpha \in \left[\overline{\alpha},\frac{2}{3}\right)$ and the lower figure the case $\alpha \in \left(\frac{1}{2},\overline{\alpha}\right)$ (for $\alpha < \beta$). }
\label{fig:geniediagram}
\end{center} 
\end{figure}
The upper bound obtained in this way is bounded further in different ways for the different channel gain regimes
determined by $\alpha$ and $\beta$. 

{\bf i) The case $\alpha < \beta$:} We start with the case $\alpha \geq \frac{2}{3}$. Here, the bound is obtained by letting a genie provide 
$X_1^N$ and $X_2^N$ to receiver $Rx_2$, which results in
\begin{IEEEeqnarray}{l} \label{eq:alphagreater23bound}
(R_1 + R_2 + R_3)N -\varepsilon_{N}N \\ \notag
\,\, \,\,\, \leq  I(X_1^N,X_2^N;Y_1^N) + I(X_3^N;Y_2^N,X_1^N,X_2^N) \\ \notag
\,\, \,\,\, = H(Y_1^N) + H(X_3^N) - H(Y_1^N|X_1^N,X_2^N)\\ \notag
\,\, \,\,\, \leq N \text{min}\left(2n_1,\text{max}(n_1,n_i) + (n_1-n_i)^{+}\right).
\end{IEEEeqnarray}

For $\alpha \in \left(\frac{1}{2},\frac{2}{3}\right)$, we write  $G_1 = Y^N_1[\sigma+1:\sigma+\tau]$, $G_2 = Y^N_2[1:n_i]$, $\hat{X_3} = X_3^N[1:n_i]$, $X_1^d = X_1^N[\sigma+1:n_1-n_i],X_2^d = X_2^N[\sigma+1:n_1-n_i]$, $X_1^t = X_1^N[1:\sigma]$ and $X_2^t = X_2^N[1:\sigma]$ (c.f. Figure \ref{fig:geniediagram}). Then, we have 
\begin{IEEEeqnarray}{l}
(R_1 + R_2 + R_3)N -\varepsilon_{N}N \\ \notag
\,\, \,\,\, \leq  I(X_1^N,X_2^N;Y_1^N,G_1) + I(X_3^N;Y_2^N,G_2) \\ \notag
\,\, \,\,\, = H(G_1) + H(Y_1^N|G_1) - H(Y_1^N|G_1,X^N_1,X^N_2) + H(G_2)\\ \notag
\,\,\,\,\,\,\,\,\,\,\, - H(G_2|X^N_3)+ H(Y^N_2|G_2) - H(Y^N_2|G_2,X^N_3)\\ \notag
\,\, \,\,\, = H(G_1) + H(Y^N_1|G_1) - H(\hat{X_3}) + H(G_2) \\ \notag
\,\,\,\,\,\,\,\,\,\,\,- H(X_1^t \oplus X_2^t) + H(Y^N_2|G_2) - H(Y^N_2|G_2,X^N_3)\\ \notag
\,\, \,\,\,  \leq 2 n_i N+ H(G_1) - H(X_1^d \oplus X_2^d),
\end{IEEEeqnarray}
where we used $H(Y_1|G_1) \leq N(\sigma + n_i)$, $H(Y_2|G_2) \leq N(n_1-n_i)$,
$H(G_2) - H(\hat{X_3}) \leq H(\hat{X_3}) + H(X_1^t \oplus X_2^t) - H(\hat{X_3}) = H(X_1^t \oplus X_2^t)$ and
$H(Y^N_2|G_2,X^N_3) \geq H(X_1^d \oplus X_2^d)$.
%\begin{IEEEeqnarray}{l} \notag
%
%\end{IEEEeqnarray}
Now 
\begin{equation}
 H(G_1) - H(X_1^d \oplus X_2^d) \leq
\begin{cases} 
 N \varphi_2(2n_1- 3n_i,\Delta) , & \mbox{if} \;\;   \sigma \geq \Delta \\
 N \varphi_2(n_2 - n_i,\Delta), & \mbox{if} \;\;  \sigma < \Delta,  
\end{cases}
\end{equation}
which for $\sigma \geq \Delta$ follows from applying Lemma \ref{lem:entropybound1} given below. A slight modification of Lemma \ref{lem:entropybound1} (which is not given here due to the lack of space) shows the other case. Note that $\sigma < \Delta$ corresponds to $\alpha \in \left(\frac{1}{2},\overline{\alpha}\right)$ and $\sigma \geq \Delta$ to $\alpha \in \left[\overline{\alpha},\frac{2}{3}\right)$.

The bound for $\alpha \in \left[0,\frac{1}{2}\right)$ is proved by a similar argument, which we sketch only here: $G_1$ and $G_2$ are taken as $G_1 = Y^N_1[1:n_1-n_i]$ and $G_2 = Y^N_2[1:n_i]$, respectively and by an appropriate adjustment of Lemma \ref{lem:entropybound1}, the bound follows. We remark that for $\alpha \in \left[0, \frac{\beta}{2}\right)$, the outer bound also follows from the results in \cite{BW_ISIT11}. 

{\bf ii) The case $\alpha \geq \beta$:} The bound $R_{\Sigma} \leq 2 \cdot \text{max}(n_i,(n_1-n_i)^{+})$ follows easily by providing the genie information $G_1 = X_1^N[1:\text{min}(n_1,n_i)]$ and $G_2 = X_1^N[1:\text{min}(n_1,n_i)]$ to receiver $Rx_1$ and $Rx_2$, respectively. The other bound $R_{\Sigma}\leq \text{min}\left(2n_1,\text{max}(n_1,n_i) + (n_1-n_i)^{+}\right)$ has already been shown above in (\ref{eq:alphagreater23bound}). 
\end{proof}

\begin{lem}\label{lem:entropybound1}
Let $A \in \mathbb{F}_2^{n\times m}$, $B \in \mathbb{F}_2^{n + \Delta \times m}$ be independent random matrices with $m,n, \Delta \in \mathbb{N}$ and  $B' = B [1:n], B'' = B[\Delta+1:n+\Delta]$. 
Then, it holds that
\begin{equation} \label{eq:shiftbound1}
H(A \oplus B' ) - H(A \oplus B'') \leq m \varphi_2(n,\Delta).
\end{equation}
\end{lem}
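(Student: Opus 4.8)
The plan is to reduce the bound to a one-step recursion for $\varphi_2$ and then close it by induction. First I would record the arithmetic identities satisfied by $\varphi_2$. Writing $l=l(n,\Delta)$ and splitting on the parity of $l$ as in the definition (\ref{eq:phi2}), a direct check gives $\varphi_2(n,\Delta)=n$ for $0\le n<\Delta$, $\varphi_2(n,\Delta)=\Delta$ for $\Delta\le n<2\Delta$, and the monotone recursion $\varphi_2(n,\Delta)=\Delta+\varphi_2(n-2\Delta,\Delta)$ for $n\ge 2\Delta$ (equivalently $\varphi_2(n,\Delta)+\varphi_2(n-\Delta,\Delta)=n$, obtained by noting that $l(n-\Delta,\Delta)=l-1$ has the opposite parity). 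The step size $2\Delta$ is the important one: it preserves the parity of $l$ and is monotone, so it is amenable to a clean induction, whereas the single-$\Delta$ identity carries a sign flip that does not propagate an upper bound.

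Second, I set up the induction on $l$ in steps of $2\Delta$. Denote the rows of $A$ by $a_1,\dots,a_n$ and those of $B$ by $b_1,\dots,b_{n+\Delta}$, and put $U=A\oplus B'$, $V=A\oplus B''$, so that the $i$-th rows are $U_i=a_i\oplus b_i$ and $V_i=a_i\oplus b_{i+\Delta}$. The base case $l=0$ (i.e.\ $n<\Delta$) is immediate, since $H(U)-H(V)\le H(U)\le mn=m\varphi_2(n,\Delta)$. For the inductive step ($l\ge 2$) I would restrict attention to the rows shifted down by $2\Delta$: the pair $(U[2\Delta+1:n],V[2\Delta+1:n])$ is precisely an instance of the lemma for $\hat A=A[2\Delta+1:n]$ and $\hat B=B[2\Delta+1:n+\Delta]$, of size $n-2\Delta$, with $\hat A,\hat B$ still independent. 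The induction hypothesis then bounds $H(U[2\Delta+1:n])-H(V[2\Delta+1:n])$ by $m\varphi_2(n-2\Delta,\Delta)$. Using the chain rule to peel the top $2\Delta$ rows, the target inequality therefore reduces, via the recursion of the first paragraph, to the single-step statement
\begin{equation}
H\big(U[1:2\Delta]\,\big|\,U[2\Delta+1:n]\big)\le m\Delta+H\big(V[1:2\Delta]\,\big|\,V[2\Delta+1:n]\big),
\end{equation}
together with the remaining base case $l=1$, which is exactly the degenerate form $H(U)-H(V)\le m\Delta$ of this statement when the shifted sub-instance is empty.

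Third --- and this is where the real work lies --- I would prove the single-step inequality by exploiting the shift structure together with the independence $A\perp B$. The key algebraic relation is $V_i=U_{i+\Delta}\oplus(a_i\oplus a_{i+\Delta})$ for $i\le n-\Delta$, while for $i>n-\Delta$ the row $V_i$ involves a ``fresh'' row $b_{i+\Delta}$ of $B$ with index exceeding $n$. Thus, up to the differences $a_i\oplus a_{i+\Delta}$ of $A$, the windows of $B$ feeding the top rows of $U$ and the top rows of $V$ coincide except for the $\Delta$ rows that enter, respectively leave, the window under the shift; the excess conditional entropy of one window over the other should then be controlled by $m\Delta$, since $\Delta$ rows over $\mathbb{F}_2^m$ carry at most $m\Delta$ bits. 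Making this precise is the main obstacle: because the rows within $A$ (and within $B$) may be arbitrarily correlated, one cannot split the argument column-by-column or row-by-row, and the comparison must be carried out at the level of whole $\Delta$-blocks, using the chain rule and the submodularity of entropy to absorb the boundary terms. I would also use the reflection symmetry obtained by reversing the row order, which interchanges the roles of $B'$ and $B''$ and hence yields the reverse bound $H(V)-H(U)\le m\varphi_2(n,\Delta)$ for free; this symmetry is convenient for handling the two boundary $\Delta$-blocks on an equal footing when establishing both the single-step inequality and the $l=1$ base case.
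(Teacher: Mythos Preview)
Your reduction to a single-step conditional inequality is set up correctly, but the proposal stops precisely where the difficulty lies: you do not prove
\[
H\big(U[1{:}2\Delta]\mid U[2\Delta+1{:}n]\big)\le m\Delta+H\big(V[1{:}2\Delta]\mid V[2\Delta+1{:}n]\big),
\]
and the heuristic that ``the excess \ldots\ should be controlled by $m\Delta$'' is not an argument. The problem is that the two conditioning variables $U[2\Delta+1{:}n]$ and $V[2\Delta+1{:}n]$ are \emph{different} random objects (they involve the same $A$-rows but shifted $B$-rows), so there is no monotonicity or data-processing relation linking the two conditional entropies; the algebraic identity $V_i=U_{i+\Delta}\oplus(a_i\oplus a_{i+\Delta})$ couples the $U$- and $V$-blocks through the \emph{entire} $A$ matrix, whose rows may be arbitrarily dependent. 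Submodularity alone does not untangle this, and the reflection symmetry you invoke only swaps the roles of $U$ and $V$ globally---it does not help with the asymmetric conditional statement. In short, the inductive step you need is at least as hard as the lemma itself, so the induction does not close.

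The paper's proof avoids this recursion entirely. It partitions $A,B',B''$ into $\Delta$-row blocks $A_1,\dots,A_l$ and $B_0,\dots,B_l$ (plus a remainder of length $n\bmod\Delta$) and works globally in one shot. The key move is to lower-bound $H(A\oplus B'')$ by conditioning on the \emph{alternating} blocks $(A_{2k-1})_k$ and $(B_{2k})_k$: because $A$ and $B$ are independent, this conditioning turns $H\big((A_k\oplus B_k)_k\big)$ into $H\big((A_{2k})_k,(B_{2k-1})_k\big)=H\big((A_{2k})_k\big)+H\big((B_{2k-1})_k\big)$, which by a trivial data-processing step dominates $H\big((A_{2k}\oplus B_{2k-1})_k\big)$---exactly the non-trivially bounded part of $H(A\oplus B')$ after the remaining $\lfloor l/2\rfloor$ blocks and the remainder are estimated by their dimension. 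This global conditioning on interleaved $A$- and $B$-blocks is the idea missing from your proposal; it exploits the independence $A\perp B$ in one clean step rather than trying to propagate a boundary estimate through an induction.
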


\begin{proof}
We let $l = n ~\text{div}~ \Delta, Q = n~ \text{mod}~ \Delta$ and introduce the following labels for blocks of rows of the matrices $A, B'$ and $B''$: $A = [(A_k)_{k=1}^{l};Q_{A}], B' = [(B_k)_{k=0}^{l-1};Q'_{B}], B'' = [(B_k)_{k=1}^{l};Q''_{B}]$, where $Q_{A},Q'_{B},Q''_{B} \in \mathbb{F}_2^{Q \times m}$ and $A_k,B_k \in \mathbb{F}_2^{\Delta \times m}$. 
First consider the case that $l$ is even. Then we have
\begin{IEEEeqnarray}{l}
H(A \oplus B' ) - H(A \oplus B'') \\ \notag
\,\, \,\,\,\leq m\Delta + H\left[(A_{k+1} \oplus B_k)_{k=
1}^{l-1};Q_A \oplus Q'_B \right] \\ \notag
\,\,\,\,\,\,\,\,\,\,\,- H\left[(A_k \oplus B_k)_{k=1}^{l}; Q_A \oplus Q''_{B}\right] \\ \notag
\,\, \,\,\, \leq m\Delta + H\left[(A_{k+1} \oplus B_k)_{k=1}^{l-1};Q_A \oplus Q'_B \right]\\ \notag
\,\,\,\,\,\,\,\,\,\,\,- H\left[(A_k \oplus B_k)_{k=1}^{l} \left|(A_{2k-1})_{k=1}^{l/2},(B_{2k})_{k=1}^{l/2}\right.\right]
\\ \notag
\,\, \,\,\,= m\Delta + H\left[(A_{k+1} \oplus B_k)_{k=1}^{l-1};Q_A \oplus Q'_B \right] \\ \notag
\,\,\,\,\,\,\,\,\,\,\,- H\left[(A_{2k})_{k=1}^{l/2}; (B_{2k-1})_{k=1}^{l/2}\right]
\\ \notag
\,\, \,\,\,\leq m \frac{l\Delta}{2} + mQ + H\left[\left(A_{2k} \oplus B_{2k-1}\right)_{k=1}^{l/2}\right]\\ \notag
\,\,\,\,\,\,\,\,\,\,\, - H\left[(A_{2k})_{k=1}^{l/2}; (B_{2k-1})_{k=1}^{l/2}\right] \\ \notag
\,\, \,\,\,\leq m  \frac{l\Delta}{2} + mQ = m \varphi_2(n,\Delta),
\end{IEEEeqnarray}
where the last inequality is due to the independence of $A$ and $B$.

A similar line of argument can be applied for the case that $l$ is odd. 
\end{proof}

\section{Achievability}\label{sec:achievability} 
In this section, we describe how the outer bounds  given in Proposition \ref{prop:outerbounds} can be achieved. It turns out that it suffices to restrict to {\em linear coding} over a single symbol period. For this, each transmitter $Tx_{i}$ chooses a precoding matrix $V_i \in \mathbb{F}_2^{k_i \times q}$
and transmits the signal $V_i x_i$, where $x_i \in \mathbb{F}_2^{k_i \times 1}$ contains the data. In general, 
writing $A = V_1$, $B = S^{q-n_2} V_2$,
$C = S^{q-n_i}V_3$, $D = S^{q-n_i}V_1$, $E = S^{q-n_i}V_2$ and $F = V_3$, it is easy to see that the following rates are achievable under linear precoding: \begin{IEEEeqnarray}{rCl}
R_1 & =&  \text{rank}([A ~ B ~ C]) - \text{rank}([B ~ C]), \\
R_2 & =&  \text{rank}([A ~ B ~ C]) - \text{rank}([A ~  C]), \\
R_3  &= &  \text{rank}([D ~  E~  F]) - \text{rank}([D ~ E]).
\end{IEEEeqnarray}
In the following, we sketch how to construct precoding matrices $V_i$ that achieve the outer bounds. 
The construction of the precoding matrices again is different for different channel parameter configurations.

For the cases $\alpha \geq \beta$ and $\alpha < \beta, \alpha \geq \frac{2}{3}$, the maximum sum-rate can be achieved by keeping transmitter $Tx_2$ silent and applying the sum-rate optimal interference channel code for the interference channel formed by $Tx_1,Tx_3$ and $Rx_1,Rx_3$. The results from \cite{BreslerTseETCOMM08} show that in this way, the bounds can be achieved (by linear coding). 

For $\alpha < \beta, \alpha \in [0, \frac{\beta}{2}]$, an optimal construction follows from the results in \cite{BW_ISIT11}, and the extension to the case $\alpha < \beta, \alpha \in (\frac{\beta}{2}, \frac{1}{2}]$ is straightforward. Here, the maximum sum rate can be achieved by orthogonal coding, where for each transmitter, a set of bit levels to be used for data transmission is specified such that at the intended receiver, there is no overlap of these levels with levels used by any other transmitter. The assignment can be interpreted as {\em interference alignment}, where the bit levels are chosen such that the interference caused by $Tx_1$ and $Tx_2$ aligns at $Rx_2$ as much as possible in the levels that are unused by $Tx_3$. 

For the remaining case $\alpha \in \left(\frac{1}{2},\frac{2}{3}\right)$, the maximum sum rate can not be obtained by orthogonal coding; instead, {\em coding across levels} is necessary (as is for the interference channel in a certain interference range \cite{BreslerTseETCOMM08}). However, interference alignment still plays a key role. To describe the construction, 
we let $\rho = n_i - \sigma - \tau$ in the following. For the construction, three cases have to be distinguished: a) $\rho < 0$, b) $\rho \geq 0$ and $\alpha \in \left(\frac{1}{2},\overline{\alpha}\right)$ and c) $\rho \geq 0$ and $\alpha \in \left[\overline{\alpha},\frac{2}{3}\right)$. Due to space limitations, we only consider case c) here; the constructions for a) and b) are similar. For $n, \Delta \in \mathbb{N}$, we let $ \mathcal{K}^{\Delta}_{n} = \left\{k \in \{1,\ldots,n\}: \text{mod}(k,2\Delta) < \Delta\right\}$
and for an element $k \in \mathcal{K}^{\Delta}_{n}$, $i^{\Delta}_{n}(k)$ denotes the position of the element in the sequence of increasingly ordered elements in $\mathcal{K}^{\Delta}_{n}$. Then,  precoding matrices achieving the outer bound can be constructed as follows:
\begin{IEEEeqnarray}{rCl} \label{eq:optimalcodinrange4}
V_1(k,k)  &=&   1 , 1 \leq k \leq \Delta  \\ \notag
V_1(n_1-n_i+k,k) &=&   1 , 1 \leq k \leq \Delta  \\ \notag
V_1(\tau+k,\Delta + i^{\Delta}_{\rho}(k)) & =&   1, k \in \mathcal{K}^{\Delta}_{\rho} \\ \notag
V_1(n_i + \Delta + k,\Delta+|\mathcal{K}^{\Delta}_{\rho}|+k)  &=&  1 , 1 \leq k \leq n_2-n_i \\ \notag
V_2(k,k) & = &  1, 1 \leq k \leq \sigma  \\ \notag
V_2(n_1-n_i+k,k) & = &  1, 1 \leq k \leq \sigma  \\ \notag
V_2(\tau + k,\sigma + i^{\Delta}_{\rho - \Delta}(k)) & = &  1, k \in  \mathcal{K}^{\Delta}_{\rho - \Delta} \\ \notag
V_3(k,k) & =&   1 , 1 \leq k \leq \tau  \\ \notag
V_3(n_i +k,\tau + i^{\Delta}_{\rho + \Delta}(k)) & =&   1,  k \in \mathcal{K}^{\Delta}_{\rho + \Delta}   \\ \notag
V_3(2(n_1-n_i)+k,\tau +  |\mathcal{K}^{\Delta}_{\rho + \Delta}|+ k) & =&   1 , 1 \leq k \leq \sigma   
\end{IEEEeqnarray}

An example for this coding scheme is given in Figure \ref{fig:achievabilityexample} for $n_1 = 23, n_2 = 21$ and $n_i = 13$. 

We summarize this section in the following
\begin{thm} \label{thm:sumcapacity}
The sum capacity for the linear deterministic model equals the expressions given in Proposition \ref{prop:outerbounds}.  
 \end{thm}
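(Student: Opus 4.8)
The plan is to establish the two matching inequalities separately. The converse is already supplied by Proposition \ref{prop:outerbounds}, so the whole burden is achievability: for every channel configuration I must exhibit precoding matrices $V_1,V_2,V_3$ such that the rates $R_1,R_2,R_3$, computed through the three rank formulas given above (with $A,\dots,F$ the corresponding shifted precoders), sum to the matching outer bound. I would organize the achievability argument by the same parameter regimes used in Proposition \ref{prop:outerbounds}.

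First I would dispatch the regimes that collapse onto a two-user interference channel, namely $\alpha \ge \beta$ and $\alpha < \beta$ with $\alpha \ge \tfrac{2}{3}$. Setting $V_2 = 0$ silences $Tx_2$, so $B = E = 0$, $R_2 = 0$, and the remaining formulas reduce to the sum-rate formula of the symmetric deterministic interference channel between $(Tx_1,Rx_1)$ and $(Tx_3,Rx_2)$ with direct gain $n_1$ and cross gain $n_i$. The sum-rate-optimal linear code of \cite{BreslerTseETCOMM08} achieves the full curve $\min(\max(n_1,n_i)+(n_1-n_i)^+,\,2\max(n_i,(n_1-n_i)^+))$, which is exactly the $\alpha \ge \beta$ bound and, specialised to $\alpha < 1$, the $\alpha \ge \tfrac{2}{3}$ bound, as noted after Proposition \ref{prop:outerbounds}.

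Next I would treat the weak regimes $\alpha < \beta$, $\alpha \in [0,\tfrac{1}{2}]$ by orthogonal bit-level assignment. The range $\alpha \in [0,\tfrac{\beta}{2}]$ is covered directly by \cite{BW_ISIT11}; for $\alpha \in (\tfrac{\beta}{2},\tfrac{1}{2}]$ I would align the interference of $Tx_1$ and $Tx_2$ at $Rx_2$ into the levels left free by $Tx_3$, choosing disjoint level sets so that each bracketed block in the rank formulas is a stack of distinct shifted basis vectors. Its rank is then just the number of occupied levels, and summing gives $n_1 + n_2 - 2n_i + \varphi_1(n_i,\Delta)$, matching the first branch of the outer bound.

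The hard part is the regime $\alpha < \beta$, $\alpha \in (\tfrac{1}{2},\tfrac{2}{3})$, where orthogonal coding fails and coding across levels together with interference alignment is needed. Taking the explicit matrices (\ref{eq:optimalcodinrange4}) of case c) as representative, I would verify by direct rank computation that (i) $\text{rank}([A\,B\,C])$ attains the value making $W_{11},W_{12}$ jointly decodable at $Rx_1$; (ii) the alignment index sets $\mathcal{K}^{\Delta}_{\rho}$, $\mathcal{K}^{\Delta}_{\rho-\Delta}$, $\mathcal{K}^{\Delta}_{\rho+\Delta}$ force the shifted images $D = S^{q-n_i}V_1$ and $E = S^{q-n_i}V_2$ to coincide on as many levels as possible, keeping $\text{rank}([D\,E])$ minimal; and (iii) substituting the three ranks into the rate formulas yields $R_{\Sigma} = 2n_i + \varphi_2(2n_1-3n_i,\Delta)$. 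I expect (ii) to be the crux: one must show the periodic pattern defining $\mathcal{K}^{\Delta}_{n}$ makes the interfering copies overlap on exactly the levels predicted by $\varphi_2$, which is the same combinatorial content that Lemma \ref{lem:entropybound1} encoded on the converse side. The verifications for cases a) and b), and for the odd-$l$ branch of $\varphi_2$, follow the same template.
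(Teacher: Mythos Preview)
Your proposal is correct and follows essentially the same route as the paper: the converse is Proposition~\ref{prop:outerbounds}, and achievability is split into the identical regimes---silencing $Tx_2$ and invoking \cite{BreslerTseETCOMM08} for $\alpha\ge\beta$ and for $\alpha<\beta,\ \alpha\ge\tfrac{2}{3}$; orthogonal level assignment via \cite{BW_ISIT11} and its direct extension for $\alpha\in[0,\tfrac{1}{2}]$; and the explicit across-level precoders (\ref{eq:optimalcodinrange4}) with the alignment sets $\mathcal{K}^{\Delta}_n$ for $\alpha\in(\tfrac{1}{2},\tfrac{2}{3})$. The paper likewise only sketches the rank verification for case~c) and defers cases a), b) to analogous constructions, so your level of detail matches.
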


\begin{figure}
\begin{center}
\includegraphics[scale=0.54]{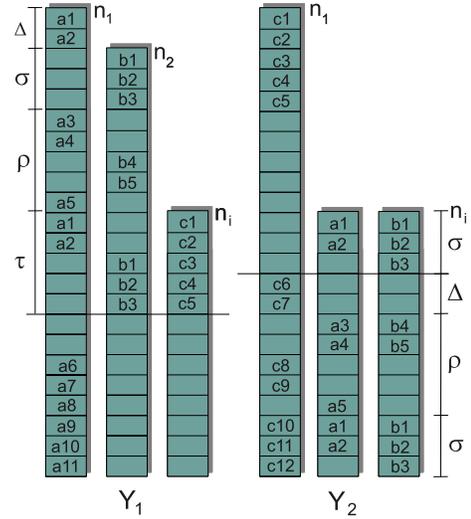}
\caption{Illustration of the code resulting from the precoding matrices given in (\ref{eq:optimalcodinrange4}) for $n_1 = 23, n_2 = 21, n_i = 13$, achieving the outer bound $R_{\Sigma} = 28$ (with $R_1 = 11, R_2=5, R_3 =12$). Here, the bars represent the bit vectors as seen at the two receivers and $a_i,b_i$ and $c_i$ are the data bits for $Tx_1,Tx_2$ and $Tx_3$, respectively. }
\label{fig:achievabilityexample}
\end{center} 
\end{figure}

%\begin{figure}
%\begin{center}
%\includegraphics[scale=0.4]{}
%\caption{W curve.}
%\label{fig:WCurve}
%\end{center} 
%\end{figure}

%\enlargethispage{-0.7cm}

\section{Generalized degrees of freedom}\label{sec:gendof} 
In this section, we briefly turn to the Gaussian channel given in  (\ref{eq:GaussianChannel}). It can equivalently be written as
\begin{IEEEeqnarray}{rCl}
Y_1 &=& \sqrt{p}X_{1} +\sqrt{p^{b}}X_2 + \sqrt{p^{a}}X_3 + Z_1,\\ \notag
Y_2 &=& \sqrt{p^{a}} X_{1} + \sqrt{p^{a}}X_2 +  \sqrt{p} X_3 + Z_2
\end{IEEEeqnarray}
with $p, a,b \in \mathbb{R}_{\geq 0}$ and with power constraints $P_i = 1$. Then, the {\em generalized degrees of freedom}  
\cite{EtkinTseWangIT08} are defined as 
\begin{equation}
 d(a,b) = \limsup_{p \rightarrow \infty} \frac{C_{\Sigma}(p,a,b)}{\frac{1}{2}\log(p)},
\end{equation}
where $C_{\Sigma}(p,a,b)$ is the sum capacity of the channel. This measure represents a high SNR description
of the system, where the channel gains are kept in constant relation (determined by $a$ and $b$) in the dB scale, which
allows a more detailed asymptotic description of the system as opposed to the degrees of freedom characterization  (also see e.g. \cite{BreslerTseETCOMM08}). 

\begin{figure}
\begin{center}
\includegraphics[scale=0.45]{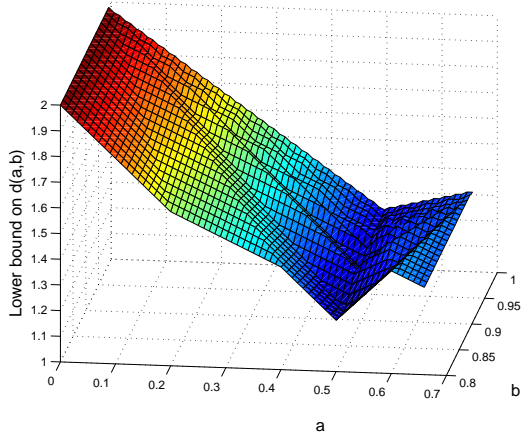}
\caption{Achievable generalized degrees of freedom for a range of parameters $a, b$.}
\label{fig:genDOF}
\end{center} 
\end{figure}
\begin{figure}
\begin{center}
\includegraphics[scale=0.45]{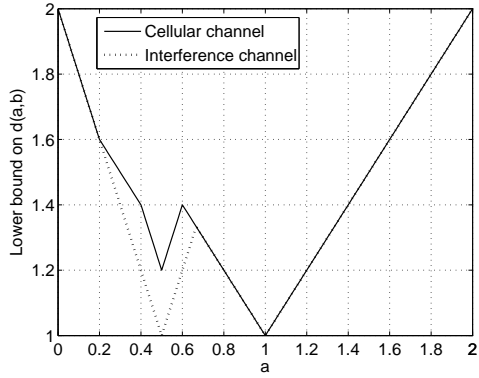}
\caption{Achievable generalized degrees of freedom for $b = 0.8$.}
\label{fig:WCurve}
\end{center} 
\end{figure}

The technique given in \cite{Jafar_Vishwanath_IT2010} \cite{Cadambe_IT2009} (and also applied in e.g. \cite{HuangCadambeJafarISIT09}) allows to transfer an achievable scheme for the deterministic model to a coding scheme for the Gaussian channel. The basic idea is to inscribe the code obtained for the linear deterministic model into a $Q$-ary expansion of the transmit signals, asymptotically mimicking the behavior of the deterministic channel. This technique can also be applied for the channel at hand. Omitting the details, we just remark that  here we use the fact that for channel gain approximation sequences $\frac{a_k}{t_k}  \rightarrow a$ and $\frac{b_k}{t_k}   \rightarrow b$ for $k \rightarrow \infty$ (where $a_k$,$b_k$, $t_k \in \mathbb{N})$, we have that $\frac{1}{t_k}\varphi_1(a_k,t_k-b_k) \rightarrow \varphi_1(a,1-b)$ for $k \rightarrow \infty$. Similarly,
 $\frac{1}{t_k}\varphi_2(b_k-a_k,t_k-b_k) \rightarrow \varphi_2(b - a,1-b)$ and
$\frac{1}{t_k}\varphi_2(2t_k-3a_k,t_k-b_k) \rightarrow \varphi_2(2 - 3a,1-b)$. Using the sum capacity results from Theorem \ref{thm:sumcapacity}, one obtains the following lower bound on the generalized degrees of freedom:

\begin{prop} \label{prop:genDOF}
For $a < b$, it holds that 
 \begin{equation}
d(a,b)\geq \begin{cases}  1+ b - 2a + \varphi_1(a,1-b), &a \in \left[0, \frac{1}{2}\right] \\ 
2a + \varphi_2(b - a,1-b), &  a \in \left(\frac{1}{2},\overline{a}\right) \\
2a + \varphi_2(2 - 3a,1-b), &  a \in \left[\overline{a},\frac{2}{3}\right) \\
\text{min}\left(2,\text{max}(1,a) + (1-a)^{+}\right),  & a \in \left[\frac{2}{3},\infty\right).\\
\end{cases} 
\end{equation}
For $a \geq b$,
 \begin{equation}
d(a,b) \geq \text{min}\left(\text{max}(1,a) + (1-a)^{+}, 2 \,\text{max}(a,(1-a)^{+})\right). 
\end{equation}
 \end{prop}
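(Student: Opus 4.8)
The plan is to transfer the achievability results of Theorem~\ref{thm:sumcapacity} for the linear deterministic model to the Gaussian channel by means of the $Q$-ary expansion embedding of \cite{Jafar_Vishwanath_IT2010, Cadambe_IT2009}, and then to pass to the high-SNR limit through a sequence of rational approximations of the exponents $a$ and $b$. First I would fix $a,b$ and choose sequences $a_k, b_k, t_k \in \mathbb{N}$ with $\frac{a_k}{t_k} \to a$ and $\frac{b_k}{t_k} \to b$ as $k \to \infty$; these play the role of the normalized channel gains, so that the associated deterministic channel has parameters $n_1 = t_k$, $n_2 = b_k$, $n_i = a_k$, and hence $\Delta = t_k - b_k$. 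For each $k$, Theorem~\ref{thm:sumcapacity} furnishes a linear deterministic code achieving the sum rate given by the expressions of Proposition~\ref{prop:outerbounds} evaluated at these parameters.

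Next I would inscribe the corresponding deterministic code into the $Q$-ary expansion (with $Q$ growing appropriately with $p$) of the Gaussian transmit signals, so that at high SNR the most significant $t_k$ digit levels of each received signal reproduce the deterministic input/output relation, with the real superposition mimicking the modulo-2 superposition up to carry effects that become asymptotically negligible. As in \cite{HuangCadambeJafarISIT09}, a standard high-SNR argument guarantees that the induced decoding error probability vanishes as $p \to \infty$. Since the direct link gain $n_1 = t_k$ corresponds to one generalized degree of freedom, each deterministic level is worth $\frac{1}{t_k}$ in gDoF units, and thus for the rationally approximated channel the scheme achieves a generalized degrees of freedom of $\frac{1}{t_k}\,C^{\mathrm{det}}_{\Sigma}(t_k, b_k, a_k)$, where $C^{\mathrm{det}}_{\Sigma}$ denotes the deterministic sum capacity of Theorem~\ref{thm:sumcapacity}.

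Finally I would let $k \to \infty$ and invoke the stated continuity limits of the rate functions, namely $\frac{1}{t_k}\varphi_1(a_k, t_k - b_k) \to \varphi_1(a, 1-b)$, $\frac{1}{t_k}\varphi_2(b_k - a_k, t_k - b_k) \to \varphi_2(b - a, 1-b)$, and $\frac{1}{t_k}\varphi_2(2t_k - 3a_k, t_k - b_k) \to \varphi_2(2 - 3a, 1-b)$, applying these regime by regime to match each branch of the claimed bound; the remaining piecewise-linear terms ($2a$, $1 + b - 2a$, and the $\min$/$\max$ expressions) converge by inspection, and the boundary $\overline{a}$ between the two middle branches is the limit of the corresponding deterministic threshold $\overline{\alpha}$.

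The main obstacle is the middle step: controlling the carry propagation of real addition so that the deterministic modulo-2 superposition is faithfully emulated across the chosen digit levels, uniformly enough that the error probability vanishes while the rate loss stays within the $o(1)$ budget as $p \to \infty$. This is precisely the point handled by the transfer machinery of \cite{Cadambe_IT2009, HuangCadambeJafarISIT09}, which I would adapt to the three-transmitter, two-receiver topology here; the remaining continuity computations are routine and can be suppressed for space.
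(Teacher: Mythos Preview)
Your proposal is correct and follows essentially the same route as the paper: transfer the deterministic achievability of Theorem~\ref{thm:sumcapacity} to the Gaussian channel via the $Q$-ary expansion technique of \cite{Jafar_Vishwanath_IT2010,Cadambe_IT2009,HuangCadambeJafarISIT09}, approximate $a,b$ by rational sequences $a_k/t_k,\,b_k/t_k$, and pass to the limit using exactly the three normalized continuity statements for $\varphi_1$ and $\varphi_2$ that the paper records. The paper likewise suppresses the carry-propagation details and cites the same references for that step.
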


Figure \ref{fig:genDOF} displays the lower bound on $d(a,b)$ in the range $a \in [0,0.7], b \in [0.8,1]$. For $b = 0.8$, the achievable generalized degrees of freedom are shown in Figure \ref{fig:WCurve} for different $a$ values, together with
the generalized degrees of freedom for the interference channel consisting of only $Tx_1,Tx_3$ and $Rx_1,Rx_2$. Note that the latter one represents the well-known {\em W curve}  \cite{EtkinTseWangIT08} of the generalized degrees of freedom for the interference channel. For $a < \frac{2}{3}$, the channel gain difference in the two-user cell can be exploited for interference alignment, pushing the achievable generalized degrees of freedom higher than the W curve, whereas for  $a \geq \frac{2}{3}$, the lower bound can be achieved by coding only for the interference channel consisting of $Tx_1,Tx_3$ and $Rx_1,Rx_2$. 

\section{Conclusions}\label{sec:conclusions} 
In this paper, we studied the linear deterministic model for a cellular-type channel where a two user multiple access channel mutually interferes with a point to point link. Under certain symmetry assumptions on the channel gains, we derived the sum capacity and the corresponding transmission schemes, which use interference alignment and linear coding across bit levels. While for a large parameter range, the 
sum capacity is identical to the sum capacity of  the interference channel obtained by silencing the weaker user in the two-user cell (multiple access channel), for a certain parameter range, the channel gain difference of in the two-user cell allows to get a higher sum rate using interference alignment. Finally, from these results, a lower bound on the generalized degrees of freedom for the Gaussian channel was given, increasing the W curve for the interference channel in a certain interference range.

Although we have considered a restricted setup in this paper, we believe that the achievability and converse arguments used in this paper give valuable insights for the consideration of more general systems. Future work will study extensions to the more general cases with additional users. Another interesting direction for future investigations is to further explore the connections to the Gaussian equivalent of the channel, specifically concerning outer bounds on the generalized degrees of freedom of the system and approximate capacity characterizations.

\bibliographystyle{IEEEtran}

\end{document}